\newtheorem{lem}{Lemma}[section]
\begin{document}
\title{Generalised Precoded Spatial Modulation for Integrated Wireless Information and Power Transfer}

% author names
\author{Rong Zhang, Lie-Liang Yang and Lajos Hanzo \\
Communications, Signal Processing and Control, School of ECS, University of Southampton, SO17 1BJ, UK \\
Email: {rz,lly,lh}@ecs.soton.ac.uk, http://www-mobile.ecs.soton.ac.uk\protect
\thanks{{The financial support of the EPSRC under the India-UK Advanced Technology Centre (IU-ATC), that of the EU under the Concerto project as well as that of the European Research Council's (ERC) Advanced Fellow Grant is gratefully acknowledged.}}
}

% the paper headers
\markboth{}{Generalised Precoded Spatial Modulation for Integrated Wireless Information and Power Transfer}

% make the title area
\maketitle

\begin{abstract}
Conventional wireless information transfer by modulating the amplitude, phase or frequency leads to an inevitable Rate-Energy (RE) trade-off in the presence of simultaneous Wireless Power Transfer (WPT). In echoing Varshney's seminal concept of \textit{jointly} transmitting both information and energy, we propose the so-called Generalised Precoded Spatial Modulation (GPSM) aided Integrated Wireless Information and Power Transfer (IWIPT) concept employing a power-split receiver. The principle of GPSM is that a particular subset of \textit{Receive} Antennas (RA) is activated and the activation pattern itself conveys useful information. Hence, the novelty of our GPSM aided IWIPT concept is that RA pattern-based information transfer is used in addition to the conventional waveform-based information carried by the classic $M$-ary PSK/QAM modulation. Following the Radio Frequency (RF) to Direct Current (DC) power conversion invoked for WPT at the power-split receiver, the non-coherent detector simply compares the remaining received power accumulated by each legitimate RA pattern for the sake of identifying the most likely RA. This operation is then followed by down-conversion and conventional Base Band (BB) $M$-ary PSK/QAM detection. Both our analysis and simulations show that the RA pattern-based information transfer represented in the Spatial Domain (SD) exhibits a beneficial immunity to any potential power-conversion induced performance degradation and hence improves the overall RE trade-off when additionally the waveform-based information transfer is also taken into account. Moreover, we investigate the impact of realistic imperfect Channel State Information at the Transmitter (CSIT) as well as that of the antenna correlations encountered. Finally, the system's asymptotic performance is characterised in the context of large-scale Multiple Input Multiple Output (MIMO) systems.
\end{abstract}

\section{Introduction}
\subsubsection{Motivation}
In thermal and statistical physics, the earliest and most famous thought experiment regarding information and energy was conceived by Maxwell in 1867, which is referred to as "Maxwell's Demon"~\cite{RevModPhys.81.1}, where the second law of thermo-dynamics was hypothetically violated by the bold hypothesis of information to energy conversion. This stimulated further intriguing research in the mid-20th century as to whether information processing itself dissipates energy, which subsequently led to "Landauer's principle"~\cite{5392446,335779a0} suggesting that thermo-dynamically reversible manipulations of information in computing, measurement and communications do not necessarily dissipate energy, since no energy is required to perform mathematical calculations. This principle has thus led to further research in the area of reversible computing and reversible electronic circuits~\cite{a21-saeedi}. Although elusive, a so-called "information heat engine" was experimentally demonstrated in~\cite{nphys1821} for showcasing information to energy conversion with the aid of feedback control. 

Despite the fact that information to energy conversion is in its infancy, it was suggested from a fundamental perspective~\cite{Engineering} that the currently separate treatment of information and energy has to be challenged in the practical design of engineering systems. Naturally, information is carried by attaching itself to a physical medium, such as waves or particles. In molecular and nano communications, information is delivered by conveying encoded particles from the source to destination~\cite{6208883,p84-akyildiz}. Similarly, in optical communications, information is delivered by photons having information-dependent intensities, which may be detected by a photon-counting process~\cite{oe-21-13-15926,oe-21-22-25954}. Given the nature of the process, the system is capable of providing a heating/illumination/propulsion function. Both of the above examples suggest that the underlying matter that carries information can be effectively reused for diverse applications. The explicit concept of transporting both information and energy simultaneously was raised by the authors of~\cite{4595260,6283481,5513714} in the context of power-line communications, where information is conveyed by wires that carry electricity. This topic was recently further extended to wireless communications, since it is desirable that a mobile device is free from being tethered in any way, as prophesied a century ago~\cite{Tesla1908}.  

\subsubsection{Background}
Before delving into the topic of Simultaneous Wireless Information and Power Transfer (SWIPT), a brief historical portrayal of Wireless Power Transfer (WPT) is warranted~\cite{6494253}. The earliest experiments on WPT were conducted by Tesla with the ultimate goal of creating a worldwide wireless power distribution system~\cite{Tesla1914}. In the mid-20th century, one of the most influential projects was the microwave-powered helicopter~\cite{1132833}. Similarly, the National Aeronautics and Space Administration's (NASA) feasibility study of the space-solar program~\cite{1145675} attracted global attention. The need for small to moderate power requirements (upto hundreds of watts) and near-field (upto a few meters) WPT increased substantially owing to the development of electronic devices in the late 20th century. Hence the wireless power consortium was established for promoting the standardization for wireless charging, known as Qi~\footnote{see websites: http://www.wirelesspowerconsortium.com/}. WPT can be carried out in two basic ways, namely based on either electromagnetic \textit{induction} in the form of inductive coupling and resonant coupling relying on coils or with the aid of electromagnetic \textit{radiation} using microwave frequencies by relying on so-called rectennas, which will be discussed in Section~\ref{sec_iwipt_rx}. At the time of writing there are already numerous WPT applications, ranging from wirelessly charged mobile phones~\cite{6529362} to wirelessly powered medical implants~\cite{6178849} and to 'immortal' sensor networks~\cite{6590061} in addition to passive Radio-frequency IDentification (RFID) devices~\cite{4342873}. Furthermore, along with successful start-up companies like WiTricity and Powercast~\footnote{see websites: http://www.witricity.com/ and http://www.powercastco.com/}, it is thus the maturing WPT and wireless communications fields that make the SWIPT an interesting emerging research topic.

However, the current research of SWIPT is still in its infancy and hence there is a paucity of literature, which may be classified as follows. The first set of contributions considered \textit{one-way} SWIPT, where the transmitter simultaneously conveys both information and power to the receiver that may be operated in two different modes~\cite{6489506}, namely either on a time-division basis or on a power-split basis. To be more specific, in the time-division mode, the receiver alternatively and opportunistically acts as an information detector and power converter~\cite{6373669}. On the other hand, in the power-split mode, a portion of the received power is used for powering the receiver and the remaining received power is used for retrieving information~\cite{6567869,1205.0618v3}. The second set of contributions considered \textit{two-way} SWIPT. Specifically, the authors of~\cite{1211.6868v3,1304.7886v2} investigated the scenario, where the transmitter conveys power to the receiver, which is converted to Direct Current (DC) power and reused for the destination's information transmission in the reverse direction. This mode of operation is similar to that of the passive RFID devices and hence it may be referred to as being \textit{two-way half-duplex}. Distinctively,~\cite{6480924} investigated a \textit{two-way full-duplex} operational mode, where a pair of nodes interactively communicates and exchanges power. The final set of treatises considered a range of energy-transfer-aided systems, such as multi-carrier systems~\cite{1303.4006}, relay-assisted systems~\cite{6313575}, interference-limited arrangements~\cite{6571308}, beamforming-assisted regimes~\cite{6211384,1307.6285v1}, multiple access~\cite{6307793} and unicast/multicast scenarios~\cite{6512098}.

\subsubsection{Novelty}
A close inspection of the existing literature reveals that most of the current designs are centered on the information transfer strategy in the presence of power transfer, while satisfying a specific received energy constraint. It would be however more beneficial to aim for the full "integration" of information and power transfer as the ultimate objective in the spirit of Varshney's concept~\cite{6283481}, who proposed that energy and information transfer should be innately inter-linked.  

To be more explicit, in one-way SWIPT equipped with a power-split receiver~\footnote{The time-division receiver is conceptually simple but it has been shown to be sub-optimum~\cite{6283481}. Hence it is not considered in this paper.}, an inevitable Rate-Energy (RE) trade-off has to be obeyed, which simply suggests that the more power is transferred for powering the receiver, the lower the communications rate becomes. This observation is intuitive, since from an information theoretic point of view, less power is left for information transfer, which inevitably leads to a reduced information rate. Naturally, having a reduced power for information transfer implies that the information becomes more prone to errors. One of the fundamental reasons for the RE trade-off is that information is conventionally transferred by modulating the amplitude, phase or frequency, which are naturally subjected to channel-induced distortion, noise and power-splitting. In order to eliminate these impediments, we propose a completely different modulation type for Integrated Wireless Information and Power Transfer (IWIPT), where information is carried not by \textit{waveforms} but by \textit{patterns}.

Explicitly, a multi-antenna aided system is conceived for IWIPT, where the specific Receive Antenna (RA) patterns are used for conveying information, while simultaneously transferring power.
\begin{itemize}  
\item We achieve this with the aid of our recently proposed Generalised Pre-coding aided Spatial Modulation (GPSM) scheme~\cite{6644231}, where information is conveyed by appropriately selecting the RA indices for information transfer in addition to the information carried by the conventional $M$-ary PSK/QAM modulation~\footnote{Spatial Modulation~\cite{4382913,Renzo} conveys extra information by appropriately selecting the \textit{transmit} antenna indices, but our GPSM scheme conveys extra information by appropriately selecting the \textit{receive} antenna indices~\cite{6644231}.}.

\item We show that the RA pattern-based information transfer represented in the Spatial Domain (SD) exhibits a beneficial immunity to any performance degradation imposed by the Radio Frequency (RF) to DC power conversion in a power-split receiver and that our GPSM scheme is capable of improving the overall RE trade-off, when additionally the waveform-based information transfer represented in the classic time domain is also taken into account.
\end{itemize}

\subsubsection{Organisation} The rest of our paper is organised as follows. In Section \ref{sec_systemmodel}, we introduce the underlying concept as well as the transceiver architecture of the GPSM aided IWIPT system employing a power-split receiver. This is followed by our analytical study in Section \ref{sec_systemanalysis}, where we demonstrate that the proposed RA pattern-based information transfer represented in the SD exhibits a beneficial immunity to RF to DC power conversion induced performance erosion. We then continue by charactering the RE trade-off of the overall system. Our simulation results are provided in Section \ref{sec_results}, while we conclude in Section \ref{sec_conclusion}.

\section{System Model}
\label{sec_systemmodel}
Our proposed GPSM aided IWIPT system relies on the availability of Channel State Information at the Transmitter (CSIT) for supporting Transmitter Pre-Coding (TPC). This scenario has been widely considered in the SWIPT literature~\cite{6489506,1211.6868v3,6571308,6211384,1307.6285v1}, since it is desirable to shift most/all signal processing demands from the less powerful Down-Link (DL) receiver that may be passive or semi-passive to the more powerful DL transmitter that may have access to the mains power.

\begin{figure}
\centering
\includegraphics[width=\linewidth]{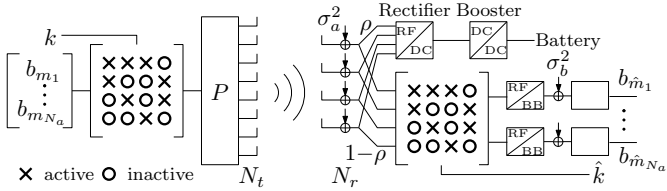}
\caption{Transceiver architecture of GPSM aided IWIPT. For example, activating $N_a = 2$ RAs out of $N_r = 4$ RAs results in a total of $|\mathcal{C}_t| = 6$ legitimate activation patterns, i.e. the patterns of $\mathcal{C}_t = \{[1,2], [1,3], [1,4], [2,3], [2,4], [3,4]\}$. This configuration delivers $k_{ant} = 2$ bits of information because $\lfloor\log_2(|6|)\rfloor=2$. Upon selecting for example $\mathcal{C} = \{\mathcal{C}_t(1),\mathcal{C}_t(2),\mathcal{C}_t(3),\mathcal{C}_t(4)\}$, we have the following mapping between the SD symbol and the activation patterns $k = 1 \mapsto \mathcal{C}(1) = [1,2]$, $k = 2 \mapsto \mathcal{C}(2) = [1,3]$, $k = 3 \mapsto \mathcal{C}(5) = [1,4]$ and $k = 4 \mapsto \mathcal{C}(6) = [2,3]$.}
\label{fig_iwipt}
\end{figure}

\subsection{Conceptual Description}
Consider a Multiple Input Multiple Output (MIMO) system equipped with $N_t$ Transmit Antennas (TA) and $N_r$ RAs, where we assume $N_t \geq N_r$. In this MIMO set-up, a maximum of $N_r = \min(N_t, N_r)$ parallel data streams may be supported by conventional waveform-based information transfer, conveying a total of $k_{eff} = N_rk_{mod}$ bits altogether, where $k_{mod} = \log_2(M)$ denotes the number of bits per symbol of a conventional $M$-ary PSK/QAM scheme and its alphabet is denoted by $\mathcal{A}$. The TPC matrix of $\pmb{P} \in \mathbb{C}^{N_t\times N_r}$ may be used for pro-actively mitigating the Inter-Channel Interference (ICI) at the transmitter by pre-processing the source signal before its transmission upon exploiting the knowledge of the CSIT. As a benefit, a low-complexity single-stream detection scheme may be used by the receiver, because the ICI is eliminated at the transmitter.

In contrast to the above-mentioned classic waveform-based multiplexing of $N_r$ data streams, in our GPSM scheme a total of $N_a < N_r$ RAs are activated so as to facilitate the simultaneous transmission of $N_a$ data streams, where the particular \textit{pattern} of the $N_a$ RAs activated conveys information in form of so-called SD symbols in addition to the information carried by the conventional $M$-ary PSK/QAM modulated symbols as seen in Fig~\ref{fig_iwipt}~\footnote{Note that in our GPSM scheme, it is not necessary to convey any additional information by a conventional $M$-ary PSK/QAM scheme, we may consider a pure spatial modulation-aided system.}. Hence, the number of bits in GPSM conveyed by a SD symbol becomes $k_{ant} = \lfloor\log_2(|\mathcal{C}_t|)\rfloor$, where the set $\mathcal{C}_t$ contains all the combinations associated with choosing $N_a$ activated RAs out of $N_r$ RAs. As a result, the total number of bits transmitted by the GPSM scheme is $k_{eff} = k_{ant}+N_ak_{mod}$. To assist our further discussions, let us define the set of selected activation patterns as $\mathcal{C} \subset \mathcal{C}_t$. We also let $\mathcal{C}(k)$ and $\mathcal{C}(k,i)$ denote the $k$th RA activation pattern and the $i$th activated RA in the $k$th activation pattern, respectively. Finally, it is plausible that the conventional MIMO scheme using purely waveform-based information transfer obeys $N_a = N_r$.

\subsection{IWIPT Transmitter}
More specifically, let $\pmb{s}^k_m$ be an \textit{explicit} representation of a so-called super-symbol $\pmb{s} \in \mathbb{C}^{N_r\times 1}$, indicating that the RA pattern $k$ is activated and $N_a$ conventional modulated symbols $\pmb{b}_m = [b_{m_1},\ldots,b_{m_{N_a}}]^T \in \mathbb{C}^{N_a\times 1}$ are transmitted, where we have $b_{m_i} \in \mathcal{A}$ and $\mathbb{E}[|b_{m_i}|^2] = 1, \forall i \in [1,N_a]$. In other words, we have the relationship 
\begin{align}
\pmb{s}^k_m &= \pmb{\Omega}_k \pmb{b}_m,
\end{align}
where $\pmb{\Omega}_k = \pmb{I}[:,\mathcal{C}(k)]$ is constituted by the specifically selected columns determined by $\mathcal{C}(k)$ of an identity matrix of $\pmb{I}_{N_r}$. Following TPC, the resultant transmit signal $\pmb{x} \in \mathbb{C}^{N_t\times 1}$ may be written as 
\begin{align}
\pmb{x} &= \sqrt{\beta_s / N_a} \pmb{P} \pmb{s}^k_m. 
\end{align}
In order to avoid any power fluctuation during the pre-processing, we introduce the scaling factor of $\beta_s$ designed for maintaining the power-constraint of $||\pmb{x}||^2 = 1$. As a natural requirement, the TPC matrix has to ensure that no energy leaks into the RA-elements of the unintended RA patterns. Hence, the classic linear Channel Inversion (CI)-based TPC~\cite{1391204,1261332} may be used, which is formulated as
\begin{align}
\pmb{P} &= \pmb{H}^H (\pmb{H} \pmb{H}^H)^{-1},
\label{eq_ci}
\end{align}
where $\pmb{H} \in \mathbb{C}^{N_r\times N_t}$ represents the MIMO channel involved. We assume furthermore that each entry of $\pmb{H}$ undergoes frequency-flat Rayleigh fading and it is uncorrelated between different super-symbol transmissions, while remains constant within the duration of a super-symbol's transmission. The power-normalisation factor of the output power after pre-processing is given by
\begin{align}
\beta_s &= N_a/\pmb{s}^H(\pmb{H} \pmb{H}^H)^{-1} \pmb{s}.
\label{eq_ci_s}
\end{align}
The Base Band (BB) signal $\pmb{x}$ is then up converted to $\pmb{x}^{RF}$ and sent through $N_t$ TAs.

\subsection{IWIPT Receiver}
\label{sec_iwipt_rx}
\subsubsection{Architecture} The power-split receiver of Fig~\ref{fig_iwipt} is employed~\cite{6489506}, where we have a power transfer tunnel and information transfer tunnel. As seen from Fig~\ref{fig_iwipt}, the received signal will be first subject to power conversion at each RA using a power-splitting ratio of $\rho \in (0,1)$ at the RF stage, where the power transfer tunnel consists of a rectifier used for converting the received RF power to DC power, which is followed by a (multi-stage) DC to DC booster. Note that from a pure WPT point of view, the RAs, the rectifier and the DC-booster are jointly known as a rectenna. After power-splitting, the remaining signal power will be used for information transfer, which relies on determining the SD symbol for RA pattern-based information transfer at the RF-stage and on detecting the conventional modulated symbols for waveform-based information transfer at the BB-stage. Note that an additional benefit of our GPSM scheme is that a reduced number of RF to BB conversion chains are required.

\subsubsection{Power Conversion}
The RF signal observed at the $N_r$ RAs may be written as 
\begin{align}
\pmb{r}^{RF} &= \pmb{H} \pmb{x}^{RF}+\pmb{w}^{RF}_a,
\label{eq_rx}
\end{align}
where $\pmb{w}^{RF}_a$ is the RF receiver's noise. The equivalent BB representation of the noise $\pmb{w}_a \in \mathbb{C}^{N_r\times 1}$ may be modelled by a circularly symmetric complex-valued Gaussian noise vector with each entry having a zero mean and a variance of $\sigma^2_a$, i.e. we have $\mathbb{E}[||\pmb{w}_a||^2] = \sigma^2_aN_r$.

At the RF-stage of Fig~\ref{fig_iwipt}, a portion $\rho$ of the received power is converted to DC, where the average energy transferred by the transmitter per unit time and gleaned from all the $N_r$ RAs is represented as
\begin{align}
Q &= \xi \mathbb{E}\left[ \rho \sum^{N_r}_{i=1}  |\tilde{r}_i^{RF}|^2 \right], 
\label{eq_eh}
\end{align}
where $\pmb{\tilde{r}}^{RF} = \pmb{H} \pmb{x}^{RF}$ is the noiseless part of $\pmb{r}^{RF}$, representing the power transferred by the transmitter, while the additive noise of $\pmb{w}^{RF}_a$ in (\ref{eq_rx}) is not transferred by the transmitter, it is imposed by the RF receiver. For the time being, designing adaptive RA specific power-splitting ratios $\rho_i, i \in [1,N_r]$ is set aside for our future investigations and we also set the RF to DC conversion efficiency to $\xi = 1$ in this paper. 

\subsubsection{RA Pattern-based Information Transfer}
After RF to DC power conversion of each RA at a ratio of $\rho$, the remaining RF signal observed at the $N_r$ RAs invoked for information transfer may be written as 
\begin{align}
\pmb{y}^{RF}_a &= \sqrt{1-\rho} \left( \pmb{H} \pmb{x}^{RF}+\pmb{w}^{RF}_a \right),
\label{eq_rx_rf}
\end{align} 
where the SD symbol may be determined by simply sorting the remaining received power accumulated by each legitimate RA pattern, which is mathematically represented as
\begin{align}
\hat{k} &= \arg \max_{\ell \in [1,|\mathcal{C}|]} \left\lbrace \sum^{N_a}_{i=1}|y_{a,\mathcal{C}(\ell,i)}^{RF}|^2 \right\rbrace. 
\label{eq_seperate1}
\end{align}
Thus, correct detection is declared when we have $\hat{k} = k$. We will later show that the factor $(1-\rho)$ does not affect the detection of SD symbols and our RA pattern-based information transfer exhibits a beneficial immunity to power conversion.

\subsubsection{Waveform-based Information Transfer}
After determining the RA pattern using the non-coherent detection process of (\ref{eq_seperate1}), we proceed with the conventional waveform-based information transfer. After RF to BB down conversion, the BB signal may be written as 
\begin{align}
\pmb{y}_b &= \sqrt{1-\rho} \left( \sqrt{\beta_s / N_a} \pmb{H} \pmb{P} \pmb{s}^k_m+\pmb{w}_a \right)+\pmb{w}_b,
\label{eq_rx_bb}
\end{align}
where $\pmb{w}_b \in \mathbb{C}^{N_r\times 1}$ is the down-conversion induced noise, which is modelled similarly to the RF receiver's noise $\pmb{w}_a$, but with a variance of $\sigma^2_b$. We also let $\sigma^2_a = \alpha \sigma^2$ and $\sigma^2_b = (1-\alpha) \sigma^2$, where $\sigma^2$ represents the total noise variance at the power-split receiver of Fig~\ref{fig_iwipt}. 

Assuming that the $v_i = \mathcal{C}(k,i)$th RA is activated, the Signal-to-Interference-plus-Noise-Ratio (SINR) $\gamma_{v_i}$ of the conventional modulated symbols after power conversion is given by
\begin{align}
\gamma_{v_i} &= \frac{||\pmb{h}_{v_i} \pmb{p}_{v_i}||^2}{[\sigma^2_a+\sigma^2_b/(1-\rho)]N_a/\beta_s},
\label{eq_sinr}
\end{align}
where $\pmb{h}_{v_i}$ is the $v_i$th row of $\pmb{H}$ representing the channel between the $v_i$th RA and the transmitter, while $\pmb{p}_{v_i}$ is the $v_i$th column of $\pmb{P}$ representing the $v_i$th TPC vector. 

As a subsequent stage after the RA pattern determination, the detection of conventional modulated symbols is formulated as
\begin{align}
\hspace*{-0.2cm}
\hat{m}_i &= \arg \min_{n_i \in [1,M]} \left\lbrace |y_{b,\hat{v}_i} - \sqrt{\frac{(1-\rho)\beta}{N_a}} \pmb{h}_{\hat{v}_i} \pmb{p}_{\hat{v}_i} b_{n_i}|^2 \right\rbrace,
\label{eq_seperate2}
\end{align}
where $\hat{v}_i = \mathcal{C}(\hat{k},i)$. Thus, correct detection is declared, when we have $\hat{m}_i = m_i, \forall i$. 

\section{System Analysis}
\label{sec_systemanalysis}
In this section, we will first show that the RA pattern-based information transfer of (\ref{eq_seperate1}) is immune to power conversion, since the Symbol Error Ratio (SER) of the SD symbols remains the same as in the absence of power transfer. This is then followed by our RE trade-off analysis of the overall GPSM scheme, which is constituted by both the SD symbol used for RA pattern-based information transfer and by the conventional modulated symbols employed for waveform-based information transfer.

\subsection{Analytical SER of RA Pattern-based Information Transfer}
We commence our discussion by directly stating:
\begin{lem}
The RA pattern-based information transfer of (\ref{eq_seperate1}) is immune to the power conversion operation of (\ref{eq_eh}) for $\rho \in (0,1)$.
\label{thm1}
\end{lem}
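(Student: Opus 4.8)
The plan is to exploit the fact that the non-coherent decision rule of (\ref{eq_seperate1}) is invariant under a common strictly-positive scaling of all of its input samples, and to observe that the power-conversion stage (\ref{eq_eh})--(\ref{eq_rx_rf}) acts on the received vector precisely as such a scaling. First I would record the effect of the CI-based TPC of (\ref{eq_ci}): since $\pmb{H}\pmb{P} = \pmb{H}\pmb{H}^H(\pmb{H}\pmb{H}^H)^{-1} = \pmb{I}_{N_r}$ (almost surely, as $\pmb{H}\pmb{H}^H$ is invertible for Rayleigh fading with $N_t\geq N_r$), the noiseless received component is (in baseband-equivalent form) $\pmb{\tilde r}^{RF} = \pmb{H}\pmb{x}^{RF} = \sqrt{\beta_s/N_a}\,\pmb{s}^k_m = \sqrt{\beta_s/N_a}\,\pmb{\Omega}_k\pmb{b}_m$, i.e. the transmitted power is steered exactly onto the $N_a$ RAs indexed by $\mathcal{C}(k)$ and is nulled on the others. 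This makes clear that the only realisation-dependent objects that decide an SD-symbol error are the channel, the RF noise $\pmb{w}^{RF}_a$, and hence the vector $\pmb{r}^{RF}$ of (\ref{eq_rx}) formed \emph{before} the split.

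Next I would substitute (\ref{eq_rx}) into (\ref{eq_rx_rf}) to get $\pmb{y}^{RF}_a = \sqrt{1-\rho}\,\pmb{r}^{RF}$ componentwise, so that for every candidate pattern $\ell$
\begin{align}
\sum_{i=1}^{N_a}\bigl|y^{RF}_{a,\mathcal{C}(\ell,i)}\bigr|^2 = (1-\rho)\sum_{i=1}^{N_a}\bigl|r^{RF}_{\mathcal{C}(\ell,i)}\bigr|^2 .
\end{align}
For $\rho\in(0,1)$ the factor $(1-\rho)$ is a strictly positive constant independent of $\ell$, so multiplying by it cannot change which index attains the maximum; consequently the estimate $\hat k$ returned by (\ref{eq_seperate1}) equals the estimate that the same rule would return from $\{\sum_i|r^{RF}_{\mathcal{C}(\ell,i)}|^2\}_\ell$, which is exactly the detector of the reference system \emph{without} power transfer (the case $\rho=0$). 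Ties carry probability zero under the continuous fading/noise law, so the decision regions — and with them the error event $\{\hat k\neq k\}$ — are identical for every $\rho\in(0,1)$ and coincide with those of the pure-information system; therefore the SD-symbol SER is independent of $\rho$, which is the claim.

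There is little here that is genuinely difficult; the point I would stress — because it is the whole message of the lemma — is \emph{why} this invariance fails at the waveform stage. At the RF stage no noise is injected \emph{after} the split, so $\pmb{y}^{RF}_a$ is a pure rescaling of $\pmb{r}^{RF}$ and $\rho$ cancels in the $\arg\max$. After down-conversion, by contrast, the extra noise term $\pmb{w}_b$ in (\ref{eq_rx_bb}) enters \emph{after} the split, which is exactly why the effective SINR in (\ref{eq_sinr}) retains the residual $\sigma^2_b/(1-\rho)$ contribution and the waveform SER does degrade with $\rho$ — the origin of the RE trade-off treated next. To close the lemma one could additionally write the resulting $\rho$-free SD-symbol error probability explicitly (a union/order-statistics bound over the $|\mathcal{C}|$ patterns acting on the chi-square-distributed accumulated powers $\sum_i|r^{RF}_{\mathcal{C}(\ell,i)}|^2$), but the scaling argument above already delivers the stated immunity.
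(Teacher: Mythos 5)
Your proof is correct, and it rests on the same essential mechanism as the paper's: the factor $(1-\rho)$ is common to every candidate pattern's accumulated power, so it cancels in the $\arg\max$ of (\ref{eq_seperate1}). The difference is in how this is packaged. You make the cancellation explicit and \emph{pathwise}: on every channel/noise realisation, $\hat k$ computed from $\pmb{y}^{RF}_a=\sqrt{1-\rho}\,\pmb{r}^{RF}$ coincides with $\hat k$ computed from $\pmb{r}^{RF}$ itself (ties having probability zero), so the error event $\{\hat k\neq k\}$ is literally the same set for all $\rho\in(0,1)$ and for $\rho=0$. The paper instead argues distributionally: it substitutes the CI precoder, normalises each $|y_{a,\cdot}|^2$ by $(1-\rho)\sigma^2_{a,0}$ to obtain central and non-central $\chi^2_2$ variates, and observes that the non-centrality parameter $\lambda_{v_i}=\beta_s|b_{m_i}|^2/N_a\sigma^2_{a,0}$ contains no $\rho$. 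Your argument is more elementary and slightly more general (it does not depend on the precoder being CI, on the noise being Gaussian, or on any particular fading law — only on the noise entering \emph{before} the split and on the scaling being common across RAs), whereas the paper's route has the side benefit of setting up exactly the chi-square machinery and the distribution $f_\lambda(\lambda)$ that are reused in Lemmas~\ref{eq_analyticallambda} and~\ref{eq_ser} to obtain the closed-form SER. Your closing remark on why the invariance fails at baseband — because $\pmb{w}_b$ is injected \emph{after} the split, leaving the residual $\sigma^2_b/(1-\rho)$ in (\ref{eq_sinr}) — is a correct and worthwhile observation that the paper only makes implicitly.
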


\begin{proof}
Firstly we note that (\ref{eq_seperate1}) relies on the principle of energy detection. Hence our analysis may be performed using the equivalent BB representation of (\ref{eq_rx_rf}), as suggested in~\cite{1447503}, which is represented as
\begin{align}
\pmb{y}_a &= \sqrt{1-\rho} \left( \sqrt{\beta_s / N_a} \pmb{H} \pmb{P} \pmb{s}^k_m+\pmb{w}_a \right).
\label{eq_rx_rf_bb}
\end{align} 
Hence (\ref{eq_seperate1}) becomes 
\begin{align}
\hat{k} &= \arg \max_{\ell \in [1,|\mathcal{C}|]} \left\lbrace \sum^{N_a}_{i=1}|y_{a,\mathcal{C}(\ell,i)}|^2 \right\rbrace. 
\label{eq_seperate11}
\end{align}
Given the values of $N_r$ as well as $N_a$ and assuming that the RA pattern $\mathcal{C}(k)$ was activated, after substituting (\ref{eq_ci}) into (\ref{eq_rx_rf_bb}), we have: 
\begin{align} 
y_{a,v_i} &= \sqrt{(1-\rho)} \left( \sqrt{\beta_s/N_a} b_{m_i}+w_{a,v_i} \right),  \hspace*{-0.4cm} &\forall v_i \in \mathcal{C}(k), \label{eq_ci_yes}\\
y_{a,u_i} &= \sqrt{(1-\rho)} w_{a,u_i},  \hspace*{-0.4cm} &\forall u_i \in \bar{\mathcal{C}}(k), \label{eq_ci_no}
\end{align} 
where $\bar{\mathcal{C}}(k)$ denotes the complementary set of the activated RA pattern $\mathcal{C}(k)$ in $\mathcal{C}$. Furthermore, upon introducing $\sigma^2_{a,0} = \sigma^2_a/2$, we have:  
\begin{align} 
|y_{a,v_i}|^2 &= \mathcal{R}(y_{a,v_i})^2 + \mathcal{I}(y_{a,v_i})^2 \nonumber \\
&\sim \mathcal{N}\left[ \sqrt{(1-\rho)\beta_s/N_a}\mathcal{R}(b_{m_i}),(1-\rho)\sigma^2_{a,0}\right] \nonumber \\ &+\mathcal{N}\left[\sqrt{(1-\rho)\beta_s/N_a}\mathcal{I}(b_{m_i}),(1-\rho)\sigma^2_{a,0}\right], \\
|y_{a,u_i}|^2 &= \mathcal{R}(w_{a,u_i})^2 + \mathcal{I}(w_{a,u_i})^2 \nonumber \\
&\sim \mathcal{N}\left[0,(1-\rho)\sigma^2_{a,0}\right]+\mathcal{N}\left[0,(1-\rho)\sigma^2_{a,0}\right], 
\end{align}
where $\mathcal{R}(\cdot)$ and $\mathcal{I}(\cdot)$ represent the real and imaginary operators, respectively. As a result, by normalisation with respect to $(1-\rho)\sigma^2_{a,0}$, we have the following observations:
\begin{align} 
|y_{a,v_i}|^2 &\sim \chi^2_2(g;\lambda_{v_i}), \qquad &\forall v_i \in \mathcal{C}(k), \label{eq_dist1}\\
|y_{a,u_i}|^2 &\sim \chi^2_2(g), \qquad &\forall u_i \in \bar{\mathcal{C}}(k), \label{eq_dist2}
\end{align}
where the non-centrality parameter is given by 
\begin{align} 
\lambda_{v_i} = \beta_s|b_{m_i}|^2/N_a\sigma^2_{a,0}.
\label{eq_non-centrality}
\end{align}
Exploiting the fact that $\mathbb{E}[|b_{m_i}|^2] = 1, \forall i$ or $|b_{m_i}|^2 = 1, \forall i$ for PSK modulation, we have
\begin{align} 
\lambda = \lambda_{v_i} = \beta_s/N_a\sigma^2_{a,0}, \qquad \forall v_i
\label{eq_non-centrality2}
\end{align}
Note that $\lambda$ is also a random variable obeying the distribution of $f_{\lambda}(\lambda)$, whose analytical expression will be given in Lemma \ref{eq_analyticallambda}. 

Recall from (\ref{eq_seperate11}) that the correct decision concerning the SD symbol occurs, when $\sum^{N_a}_{i=1}|y_{a,v_i}|^2$ is the maximum. Since the non-centrality parameter given by (\ref{eq_non-centrality}) is not a function of $\rho$, power conversion operation does not affect the quantity of $\sum^{N_a}_{i=1}|y_{a,v_i}|^2$. Hence, it becomes explicit that the power-splitting ratio $\rho$ does not affect the SER $e^s_{ant}$ of the SD symbol of our GPSM scheme for RA pattern-based information transfer expressed in Lemma~\ref{eq_ser}.
\end{proof}

\begin{lem}(Proof see Appendix C of~\cite{Zhang}) The distribution $f_\lambda(\lambda)$ of the non-centrality parameter $\lambda$ is given by:
\begin{align}
f_\lambda(\lambda) &= \frac{N_a^{N_t-N_r+1}\sigma^2_a/2}{(N_t-N_r)!} e^{-\lambda N_a\sigma^2_a/2}(\lambda \sigma^2_a/2)^{N_t-N_r}. 
\label{eq_lambda}
\end{align}
\label{eq_analyticallambda}
\end{lem}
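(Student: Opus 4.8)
The plan is to peel $\lambda$ down to a single random quantity, recognise its law from complex Wishart theory, and then transform. \emph{First}, by (\ref{eq_non-centrality2}) together with $\sigma^2_{a,0}=\sigma^2_a/2$ we have $\lambda = 2\beta_s/(N_a\sigma^2_a)$, so $\lambda$ is just a deterministic scaling of $\beta_s$ and it suffices to find $f_{\beta_s}$. From (\ref{eq_ci_s}), $\beta_s = N_a/[\pmb{s}^H(\pmb{H}\pmb{H}^H)^{-1}\pmb{s}]$ with $\pmb{s}=\pmb{s}^k_m=\pmb{\Omega}_k\pmb{b}_m$; using $|b_{m_i}|^2=1$ for PSK (or $\mathbb{E}[|b_{m_i}|^2]=1$ in the averaged sense for QAM) gives $\|\pmb{s}\|^2=N_a$, so writing $\pmb{s}=\sqrt{N_a}\,\pmb{a}$ with $\|\pmb{a}\|=1$ yields the clean form $\beta_s = 1/[\pmb{a}^H(\pmb{H}\pmb{H}^H)^{-1}\pmb{a}]$.

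\emph{Second}, I would exploit the structure of $\pmb{W}:=\pmb{H}\pmb{H}^H$. Since the entries of $\pmb{H}\in\mathbb{C}^{N_r\times N_t}$ are i.i.d.\ circularly symmetric Gaussian and $N_t\ge N_r$, $\pmb{W}$ is a central complex Wishart matrix with $N_t$ degrees of freedom. Choosing a unitary $\pmb{U}$ whose first column is $\pmb{a}$ and using the rotational invariance of the columns of $\pmb{H}$, $\pmb{U}^H\pmb{W}\pmb{U}$ has the same law as $\pmb{W}$, so $\pmb{a}^H\pmb{W}^{-1}\pmb{a}=[(\pmb{U}^H\pmb{W}\pmb{U})^{-1}]_{1,1}$ is distributed as $(\pmb{W}^{-1})_{1,1}$. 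Block-matrix inversion gives $(\pmb{W}^{-1})_{1,1}=1/W_{1,1\cdot2}$, where $W_{1,1\cdot2}$ is the Schur complement of the trailing $(N_r-1)\times(N_r-1)$ block of $\pmb{W}$; for a central complex Wishart this Schur complement is itself a one-dimensional complex Wishart with $N_t-(N_r-1)$ degrees of freedom, i.e.\ $W_{1,1\cdot2}\sim\mathrm{Gamma}(N_t-N_r+1,1)$ with density $b^{N_t-N_r}e^{-b}/(N_t-N_r)!$ on $b>0$. Hence $\beta_s\stackrel{d}{=}W_{1,1\cdot2}$ carries that Gamma density.

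\emph{Third}, a change of variables from $\beta_s$ to $\lambda=2\beta_s/(N_a\sigma^2_a)$ (so $\beta_s=\lambda N_a\sigma^2_a/2$, Jacobian $N_a\sigma^2_a/2$) substituted into the Gamma density and simplified — collecting the factor $(N_a\sigma^2_a/2)^{N_t-N_r+1}$ — reproduces exactly (\ref{eq_lambda}), and verifying $\int_0^\infty f_\lambda=1$ via the Gamma integral is a one-line sanity check. The only genuinely non-routine ingredient is the second step: the claim that the reciprocal quadratic form $1/(\pmb{a}^H\pmb{W}^{-1}\pmb{a})$ of an inverse central Wishart matrix against a fixed unit vector follows $\mathrm{Gamma}(N_t-N_r+1,1)$, which rests on the distribution (and independence) of the Schur complement of a Wishart matrix; everything else is bookkeeping, with the PSK normalisation $|b_{m_i}|^2=1$ being the sole modelling caveat.
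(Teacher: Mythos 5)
Your derivation is correct: reducing $\lambda$ to the quadratic form $1/(\pmb{a}^H(\pmb{H}\pmb{H}^H)^{-1}\pmb{a})$, invoking unitary invariance plus the Schur-complement marginal of a central complex Wishart to get $\mathrm{Gamma}(N_t-N_r+1,1)$, and then scaling by $N_a\sigma^2_a/2$ reproduces (\ref{eq_lambda}) exactly (including the caveat that $\|\pmb{s}\|^2=N_a$ holds deterministically only for PSK). The paper itself defers the proof to Appendix C of~\cite{Zhang}, but this inverse-Wishart quadratic-form argument is the standard route used there for the effective SNR of channel-inversion precoding, so your approach is essentially the same.
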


\begin{lem}(Proof see Theorem III.1 of~\cite{Zhang}) The analytical SER $e^s_{ant}$ of the SD symbol of our GPSM scheme relying on CI TPC may be formulated as:
\begin{align}
e^s_{ant} \approx 1-\int^{\infty}_{0} \left\lbrace \int^{\infty}_{0} [F_{\chi^2_2}(g)]^{N_r-N_a} f_{\chi^2_2}(g;\lambda)dg \right\rbrace^{N_a} f_\lambda(\lambda) d\lambda,
\label{eq_analyticalssk2}
\end{align}
where $F_{\chi^2_2}(g)$ represents the Cumulative Distribution Function (CDF) of a chi-square distribution having two degrees of freedom, while $f_{\chi^2_2}(g;\lambda)$ represents the Probability Distribution Function (PDF) of a non-central chi-square distribution having two degrees of freedom and non-centrality given by $\lambda$ with its PDF of $f_\lambda(\lambda)$. 
\label{eq_ser}
\end{lem}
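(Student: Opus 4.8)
The plan is to build on the distributional facts already established inside the proof of Lemma~\ref{thm1}. Conditioned on the channel realisation---equivalently, on the value of the non-centrality parameter $\lambda$ of (\ref{eq_non-centrality2})---the normalised residual powers $|y_{a,v_i}|^2$ at the $N_a$ activated RAs are i.i.d.\ non-central $\chi^2_2(g;\lambda)$ by (\ref{eq_dist1}), the normalised powers $|y_{a,u_i}|^2$ at the $N_r-N_a$ silent RAs are i.i.d.\ central $\chi^2_2(g)$ by (\ref{eq_dist2}), and the two groups are mutually independent; the factor $(1-\rho)$ has already cancelled in the normalisation, which is precisely why it is absent from (\ref{eq_analyticalssk2}). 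The energy-detection rule (\ref{eq_seperate11}) picks the legitimate $N_a$-subset with the largest aggregate power. The first move is to replace this combinatorial ``max over $|\mathcal{C}|$'' test by the simpler \emph{pairwise-dominance} event that every activated RA collects strictly more residual power than every silent RA. A swap argument shows this surrogate is sound: if the $N_a$ activated RAs occupy the top $N_a$ positions in the overall power ranking, then in any competing pattern one can exchange a silent RA for an omitted activated one and strictly raise the aggregate, so $\mathcal{C}(k)$ attains the unique maximum and detection is correct. Thus pairwise dominance implies correct detection, and---since at the SNRs of interest the residual errors are dominated by a single silent RA overtaking a single activated one---it also captures the leading-order behaviour, which is the source of the ``$\approx$'' in (\ref{eq_analyticalssk2}).

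Next I would perform the conditional computation given $\lambda$. If an activated RA has normalised power $g$, the $N_r-N_a$ silent powers are independent central $\chi^2_2$, so the probability that all of them fall below $g$ is $[F_{\chi^2_2}(g)]^{N_r-N_a}$; averaging over the law $f_{\chi^2_2}(g;\lambda)$ of $g$ gives the per-activated-RA dominance probability $\int_0^\infty [F_{\chi^2_2}(g)]^{N_r-N_a} f_{\chi^2_2}(g;\lambda)\,dg$, exactly the braced quantity in (\ref{eq_analyticalssk2}). The $N_a$ dominance events share the same pool of silent RAs, so they are positively associated rather than independent; treating them as independent and multiplying (the standard decoupling, reasonable here because the CI TPC of (\ref{eq_ci}) already renders the activated RAs' statistics mutually independent, (\ref{eq_ci_yes}), leaving the silent RAs as the only coupling) yields the braced term raised to the power $N_a$. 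A single common $\lambda$ is legitimate because $\beta_s$---hence $\lambda$ in (\ref{eq_non-centrality2})---is shared by all activated RAs of a super-symbol, and because $|b_{m_i}|^2=1$ for $M$-PSK (for $M$-QAM this is the average-power surrogate). Finally I would de-condition on $\lambda$ with the density $f_\lambda(\lambda)$ of Lemma~\ref{eq_analyticallambda}, obtaining $\Pr[\text{correct}]$ as the outer integral and $e^s_{ant}=1-\Pr[\text{correct}]$, which is (\ref{eq_analyticalssk2}).

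The main obstacle is controlling---and, if possible, orienting---the two approximations. The swap argument makes pairwise dominance merely sufficient for correct detection, while positive association (Harris/FKG, since each dominance event is monotone in the silent powers) makes the product underestimate the joint dominance probability; both biases act the same way, so (\ref{eq_analyticalssk2}) is in fact an upper bound on $e^s_{ant}$, a pessimistic estimate rather than an uncontrolled one. A more careful route would keep the order-statistic coupling and write
\[
\Pr[\text{correct}\mid\lambda] \;\geq\; \mathbb{E}\!\left[\bigl(1-F_{\chi^2_2}(B_{\max};\lambda)\bigr)^{N_a}\right],
\]
with $B_{\max}$ the largest of the $N_r-N_a$ silent-RA powers, then apply the independence decoupling over the $N_a$ activated RAs to recover (\ref{eq_analyticalssk2}); the detailed bookkeeping, together with the fact that the surrogate tightens towards the exact max-sum rule as $|\mathcal{C}|\to|\mathcal{C}_t|$ and as the SNR grows, is what I would defer to Theorem~III.1 of~\cite{Zhang}.
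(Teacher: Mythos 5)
Your derivation is correct and follows exactly the route implied by the structure of (\ref{eq_analyticalssk2}): pairwise dominance of the activated over the silent RAs as a sufficient condition for the max-sum detector to succeed, independence decoupling of the $N_a$ dominance events sharing the silent-RA pool, and de-conditioning over the common non-centrality $\lambda$ via Lemma~\ref{eq_analyticallambda}. The paper itself only cites Theorem~III.1 of~\cite{Zhang} for the proof, but your observation that both approximations act in the same direction -- so that (\ref{eq_analyticalssk2}) is an upper bound on $e^s_{ant}$ -- is corroborated by the paper's own remark that the analytical curves ``form tight upper bounds'' on the simulated SER.
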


\subsection{Analytical SER of Waveform-based Information Transfer}
We first introduce the SER $e^s_{mod}$ of the conventional modulated symbols $b_{m_i} \in \mathcal{A}$ for waveform-based information transfer in the \textit{absence} of SD symbol errors. For a specific activated RA pattern in $\mathcal{C}(k)$, the SINR of (\ref{eq_sinr}) becomes 
\begin{align} 
\gamma &= \gamma_{v_i} = \beta_s/N_a[\sigma^2_a+\sigma^2_b/(1-\rho)], \qquad \forall v_i
\label{eq_snr}
\end{align}
and for the remaining deactivated RAs in $\bar{\mathcal{C}}(k)$ we have only random noise contributions of zero mean and of a variance of $(1-\rho)\sigma^2_a+\sigma^2_b$. Thus the SER $e^s_{mod}$ may be upper bounded by~\cite{JohnProakis2008}: 
\begin{align}
e^s_{mod} &\leq N_{min} \int^{\infty}_{0} \mathcal{Q} \left( d_{min} \sqrt{\gamma/2} \right) f_\gamma(\gamma)d\gamma,
\label{eq_analyticalmod}
\end{align}
where $f_\gamma(\gamma)$ is a scaled version of $f_\lambda(\lambda)$ given in Lemma \ref{eq_analyticallambda}, i.e. we have $f_\gamma(\gamma) = \kappa f_\lambda(\kappa\gamma)$, where
\begin{align}
\kappa = 2[\sigma^2_a+\sigma^2_b/(1-\rho)]/\sigma^2_a.
\end{align}
Moreover, $d_{min}$ is the minimum Euclidean distance in the conventional modulated symbol constellation, $N_{min}$ is the average number of the nearest neighbours separated by $d_{min}$ in the constellation and $\mathcal{Q}(\cdot)$ denotes the Gaussian $\mathcal{Q}$-function. 

We then introduce $\tilde{e}^s_{mod}$ for representing the SER of the conventional modulated symbols in the \textit{presence} of SD symbol errors due to the detection of (\ref{eq_seperate1}), which is formally given by

\begin{lem}(Proof see Appendix A of~\cite{Zhang}) Given the $k$th activated RA patten, the SER of the conventional modulated symbols for waveform-based information transfer in the \textit{presence} of SD symbol errors can be calculated as:
\begin{align}
\tilde{e}^s_{mod} &= (1-e^s_{ant})e^s_{mod}+\frac{e^s_{ant}}{(2^{k_{ant}}-1)}\sum_{\ell \neq k} \underbrace{\frac{N_{c}e^s_{mod}+N_d e^s_o}{N_a}}_{B},
\label{eq_convertion}
\end{align}
where $N_c$ and $N_d = (N_a-N_c)$ represent the number of RAs that are common and different between $\mathcal{C}(\ell)$ and $\mathcal{C}(k)$, respectively. Mathematically we have:
\begin{align}
N_c &= \sum^{N_a}_{i=1} \mathbb{I} \left[ \mathcal{C}(\ell,i) \in \mathcal{C}(k) \right].
\end{align}
Moreover, $e^s_o$ is defined as the modulation-type-dependent SER. 
\label{lem_convertion}
\end{lem}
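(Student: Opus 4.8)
The plan is to condition on the transmitted RA pattern being $\mathcal{C}(k)$ and then apply the law of total probability to the outcome $\hat k$ of the non‑coherent SD detector (\ref{eq_seperate1}), splitting the sample space into the event $\{\hat k = k\}$ and the events $\{\hat k = \ell\}$ for each $\ell \neq k$ in $\mathcal{C}$. Writing $\tilde e^s_{mod} = \mathrm{Pr}(\hat k = k)\,\mathrm{Pr}(\text{mod. symbol error}\mid \hat k = k) + \sum_{\ell\neq k}\mathrm{Pr}(\hat k = \ell)\,\mathrm{Pr}(\text{mod. symbol error}\mid \hat k = \ell)$ already exposes the two‑term structure of (\ref{eq_convertion}); the remaining work is to evaluate the four factors.

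For the first term, $\mathrm{Pr}(\hat k = k) = 1 - e^s_{ant}$ by the definition of the SD SER in Lemma~\ref{eq_ser}, and conditioned on $\hat k = k$ the detector (\ref{eq_seperate2}) slices exactly the $N_a$ genuinely activated RAs with the intended stream assignment, so each modulated symbol errs with probability $e^s_{mod}$ as given by (\ref{eq_analyticalmod}). Here I would treat the channel averaging embedded in $e^s_{mod}$ and in $e^s_{ant}$ as decoupled, which is the approximation implicit in re‑using the unconditioned $e^s_{mod}$. For the error terms I would invoke the standard assumption that, given an SD detection error, each of the $2^{k_{ant}}-1 = |\mathcal{C}|-1$ remaining patterns is equally likely, so $\mathrm{Pr}(\hat k = \ell) = e^s_{ant}/(2^{k_{ant}}-1)$, which supplies the prefactor in front of the sum.

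The combinatorial core is the evaluation of $\mathrm{Pr}(\text{mod. symbol error}\mid \hat k = \ell)=B$. I would partition the $N_a$ RAs inspected by (\ref{eq_seperate2}), namely those indexed by $\mathcal{C}(\ell)$, into the $N_c = \sum_i \mathbb{I}[\mathcal{C}(\ell,i)\in\mathcal{C}(k)]$ RAs that were truly activated and the $N_d = N_a - N_c$ RAs that lie in $\bar{\mathcal{C}}(k)$ and hence, by (\ref{eq_ci_no}), carry only the scaled noise $\sqrt{1-\rho}\,w_{a}$ plus the down‑conversion noise $w_b$. On a truly activated RA the per‑symbol error probability is again $e^s_{mod}$; on a noise‑only RA the detector (\ref{eq_seperate2}) merely slices pure noise against the constellation, which by construction yields the modulation‑type‑dependent error rate $e^s_o$. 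Averaging uniformly over the $N_a$ detected symbols gives $B = (N_c e^s_{mod} + N_d e^s_o)/N_a$, and substituting the four factors back into the total‑probability expansion reproduces (\ref{eq_convertion}) exactly.

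The step I expect to be the main obstacle is the careful bookkeeping on the $N_c$ common RAs: because patterns are stored as sorted index sets, a common RA need not occupy the same position in $\mathcal{C}(\ell)$ as in $\mathcal{C}(k)$, so the symbol sliced there is not automatically the one nominally expected at that detection slot, and charging its contribution as $e^s_{mod}$ rather than something larger needs justification. I would handle this either by arguing that the dominant SD error events are the $N_d=1$ ones and tracking the index alignment there, or by adopting the paper's convention of counting waveform symbol errors at the aggregate per‑RA level; this is precisely where I would lean on the detailed derivation in Appendix~A of~\cite{Zhang}. The uniform wrong‑pattern assumption and the $e^s_{mod}$/$e^s_{ant}$ independence approximation are the other two places where the identity holds only up to the modelling conventions adopted.
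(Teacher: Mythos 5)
Your proposal is correct and follows the same route that the lemma's structure dictates and that the cited Appendix A of~\cite{Zhang} takes: total probability over the SD detector's outcome, the uniform wrong-pattern assumption supplying the $e^s_{ant}/(2^{k_{ant}}-1)$ prefactor on each $\ell \neq k$, and the partition of $\mathcal{C}(\ell)$ into the $N_c$ common RAs charged $e^s_{mod}$ and the $N_d$ noise-only RAs charged the random-guess SER $e^s_o$, averaged over the $N_a$ slots. You also correctly isolate the only delicate points --- the per-RA rather than per-stream-slot accounting on misaligned common RAs, the equiprobable-wrong-pattern assumption, and the decoupling of the channel averages inside $e^s_{ant}$ and $e^s_{mod}$ --- which are precisely the modelling conventions under which the stated identity holds.
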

Finally, when CI based TPC is introduced, only random noise may be received by the $N_d$ RAs in $\mathcal{C}(\ell)$, thus $e^s_o$ simply represents the SER of the conventional modulated symbols as a result of a random guess. For example, for QPSK, we have $e^s_o = 3/4$. Note that, when the SER of the SD symbols $e^s_{ant}$ is small, we have $\tilde{e}^s_{mod} \approx e^s_{mod}$. 

\subsection{Rate Energy Trade-off}
\subsubsection{DCMC Capacity Expression}
The RE trade-off may be formulated using the power conversion of (\ref{eq_eh}) and comparing it directly against the Discrete-input Continuous-output Memoryless Channel (DCMC) capacity of our GPSM scheme. This is because in our GPSM scheme the SD symbols convey integer values constituted by the RA pattern index, which does not obey the shaping requirements of Gaussian signalling~\cite{ThomasM.Cover2006}. Hence, for simplicity's sake, we discuss the DCMC capacity of our GPSM scheme in the context of discrete-input signalling for both the SD symbol and for the conventional modulated symbols mapped to it. 

The Mutual Information per Bit (MIB) $I(z;\hat{z})$ of our GPSM scheme measured between the input bits $z \in [0,1]$ and the corresponding demodulated output bits $\hat{z} \in [0,1]$ is given by:
\begin{align}
\hspace*{-0.2cm}
I(z;\hat{z}) &= -\sum_{z} P_z \log P_z - \sum_{\hat{z}} P_{\hat{z}} \left[ \sum_{z} P_{z|\hat{z}} \log P_{z|\hat{z}} \right] \nonumber \\
&= 1 + e^b_{eff} \log e^b_{eff} + (1-e^b_{eff}) \log(1-e^b_{eff}),
\label{eq_mi}
\end{align}
where $P_z$ is the Probability Mass Function (PMF) of $z$, where we adopt the common assumption of equi-probability bits of $P_{z=0} = P_{z=1} = 1/2$. On the other hand, the conditional entropy, representing the average uncertainty about $z$ after observing $\hat{z}$, is a function of the average Bit Error Ratio (BER) of our GPSM scheme, denoted as $e^b_{eff}$. As a result, the so-called DCMC capacity becomes 
\begin{align}
R &= k_{eff}I(z;\hat{z}).
\label{eq_dcmc}
\end{align}

\subsubsection{Average BER Expression}
Let us now discuss the expression of $e^b_{eff}$ for the sake of calculating (\ref{eq_mi}) and (\ref{eq_dcmc}), which may be derived from its accurate SER expression of both the SD symbol for the RA pattern-based information transfer given in (\ref{eq_analyticalssk2}) and for the conventional modulated symbols for waveform-based information transfer given in (\ref{eq_convertion}).

Let $e^b_{ant}$ and $\tilde{e}^b_{mod}$ represent the BER of the SD symbol for RA pattern-based information transfer and of the conventional modulated symbols for waveform-based information transfer. Hence, the accurate expression of the average BER of our GPSM scheme may be written as:
\begin{align}
e^b_{eff} &= (k_{ant}e^b_{ant}+N_a k_{mod} \tilde{e}^b_{mod})/k_{eff} \nonumber \\
&= (\delta_{ant} e^s_{ant}+N_a \tilde{e}^s_{mod})/k_{eff}, 
\label{eq_ber}
\end{align}
where the second equation of (\ref{eq_ber}) obeys from the relationship of 
\begin{align}
\tilde{e}^b_{mod} &= \tilde{e}^s_{mod}/k_{mod}, \label{eq_ber_mod} \\
e^b_{ant} &= \delta_{k_{ant}} e^s_{ant}/k_{ant}, \label{eq_ber_ant}
\end{align}
where we formulate Lemma \ref{lem_correction} for the expression of $\delta_{k_{ant}}$ acting as a correction factor in (\ref{eq_ber}).

\begin{lem}(Proof see Appendix B of~\cite{Zhang}) The generic expression of the correction factor $\delta_{k_{ant}}$ for $k_{ant}$ bits of information is given by:
\begin{align}
\delta_{k_{ant}} &= \delta_{k_{ant}-1} + \frac{2^{k_{ant}-1}-\delta_{k_{ant}-1}}{2^{k_{ant}}-1},
\label{eq_correction}
\end{align}
where given $\delta_{0} = 0$, we can recursively determine $\delta_{k_{ant}}$.
\label{lem_correction}
\end{lem}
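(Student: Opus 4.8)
The plan is to interpret $\delta_{k_{ant}}$ probabilistically and then prove the recursion by a one-bit decomposition of the label space. From (\ref{eq_ber_ant}), $\delta_{k_{ant}} = k_{ant}\,e^b_{ant}/e^s_{ant}$ is precisely the \emph{expected number of bit errors caused by a single SD-symbol error}, i.e.\ the average Hamming distance between the $k_{ant}$-bit label of the activated RA pattern and the label of the erroneously detected pattern, given that the detector of (\ref{eq_seperate1}) has erred. Since the energies compared in (\ref{eq_seperate1}) are the exchangeable quantities of (\ref{eq_dist1})--(\ref{eq_dist2}) and no structure is imposed on the label-to-pattern map $k\mapsto\mathcal{C}(k)$, I would adopt the standard model in which the erroneously chosen pattern is uniformly distributed over the $2^{k_{ant}}-1$ competitors; then, writing $k=k_{ant}$ and letting $d_H(\cdot,\cdot)$ denote Hamming distance, $\delta_k$ equals the expected Hamming weight of a uniformly random non-zero $k$-bit word.

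As a warm-up and consistency check I would record the closed form: summing over all $2^k$ words, $\sum_w |w| = k\,2^{k-1}$ (each of the $k$ coordinates equals $1$ in half of them), and the all-zero word contributes nothing, so $\delta_k = k\,2^{k-1}/(2^k-1)$, which is consistent with the stated convention $\delta_0=0$ (zero bits carry no bit errors).

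The recursion itself I would prove directly, which is cleaner than manipulating the closed form. Split every $k$-bit word into its leading bit and its trailing $(k-1)$-bit suffix, and fix a reference word $s$ with suffix $s'$. Of the $2^k-1$ other words, the $2^{k-1}-1$ sharing the leading bit of $s$ are of the form $(b,s'')$ with $b$ the leading bit of $s$ and $s''\neq s'$, and their total distance to $s$ is $\sum_{s''\neq s'} d_H(s',s'') = (2^{k-1}-1)\delta_{k-1}$; the $2^{k-1}$ words with the opposite leading bit have total distance $\sum_{\text{all }s''}\bigl(1+d_H(s',s'')\bigr) = 2^{k-1} + (2^{k-1}-1)\delta_{k-1}$, the $s''=s'$ term contributing only its leading-bit mismatch. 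Adding the two pieces, the total distance from $s$ to all other $k$-bit words is $(2^k-2)\delta_{k-1}+2^{k-1}$, hence
\begin{align}
\delta_k &= \frac{(2^k-2)\delta_{k-1}+2^{k-1}}{2^k-1} = \delta_{k-1} + \frac{2^{k-1}-\delta_{k-1}}{2^k-1},
\end{align}
where the second equality just rewrites $(2^k-2)\delta_{k-1}+2^{k-1} = (2^k-1)\delta_{k-1} + \bigl(2^{k-1}-\delta_{k-1}\bigr)$. This is exactly (\ref{eq_correction}), and with $\delta_0=0$ it determines all $\delta_{k_{ant}}$.

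I expect the only real obstacle to be the modelling step, not the algebra. The energy sum in (\ref{eq_seperate1}) for a wrong pattern $\mathcal{C}(\ell)$ actually depends on its overlap $N_c$ with the true pattern $\mathcal{C}(k)$ --- the very quantity bookkept in Lemma~\ref{lem_convertion} --- so the competing patterns are not literally equiprobable. I would discharge this either by averaging uniformly over all bijective labelings, which makes the uniform-weight statement \emph{exact}, since XOR by a fixed label is a bijection of $\{0,1\}^{k}$ and thus maps whatever error distribution onto the uniform distribution on non-zero words, or by noting that in the regime of interest $e^s_{ant}$ is small so that the term $\delta_{k_{ant}}e^s_{ant}$ appearing in (\ref{eq_ber}) is second order anyway; once the uniform model is granted, the decomposition above is elementary.
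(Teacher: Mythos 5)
Your proof is correct: the uniform-error model over the $2^{k_{ant}}-1$ competing patterns is exactly the standard assumption underlying this SER-to-BER correction factor (it reproduces the classical orthogonal-signalling conversion $\delta_{k}=k\,2^{k-1}/(2^{k}-1)$, consistent with $\delta_0=0$, $\delta_1=1$, $\delta_2=4/3$), and your leading-bit/suffix decomposition of the label space yields precisely the stated recursion after the elementary regrouping $(2^{k}-2)\delta_{k-1}+2^{k-1}=(2^{k}-1)\delta_{k-1}+(2^{k-1}-\delta_{k-1})$. The paper defers the proof to Appendix B of~\cite{Zhang}, and the recursive form of the result indicates that the same one-bit conditioning argument is used there, so your route is essentially the paper's; your closing caveat --- that the competitors are not literally equiprobable because their accumulated energies depend on the overlap $N_c$ with the true pattern --- is a fair observation about the modelling assumption that the lemma itself takes for granted.
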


\subsection{Further Discussions}
\subsubsection{Robustness}
The above RE trade-off is based on the assumption of having perfect CSIT. Let us now consider the system's robustness to both CSIT-errors and antenna correlations, where the expression $e^b_{eff}$ of our GPSM scheme will be determined empirically for calculating (\ref{eq_mi}) and (\ref{eq_dcmc}).  

Like in all TPC schemes, an important related aspect is its resilience to CSIT inaccuracies. In this paper, we let $\pmb{H} = \pmb{H}_t+\pmb{H}_e$, where $\pmb{H}_t$ represents the matrix hosting the average CSI, with each entry obeying the complex Gaussian distribution of $h_t \sim \mathcal{CN}(0,\sigma^2_t)$ and $\pmb{H}_e$ is the instantaneous CSI error matrix obeying the complex Gaussian distribution of $h_e \sim \mathcal{CN}(0,\sigma^2_e)$, where we have $\sigma^2_t+\sigma^2_e = 1$. As a result, only $\pmb{H}_t$ is available at the transmitter for pre-processing. There is a plethora of CSIT inaccuracy counter-measures conceived for general transmitter pre-coding schemes in the literature~\cite{1237415,4567684}, but they are beyond the scope of our discussions in this paper.

Another practical phenomenon is the presence of correlation amongst the signals at the transmit and receive antennas. The correlated MIMO channel is modelled by the widely-used Kronecker model~\cite{1021913}, which is formulated as $\pmb{H} = (\pmb{R}^{1/2}_t)\pmb{G}(\pmb{R}^{1/2}_r)^T$, with $\pmb{G}$ representing the original MIMO channel imposing no correlation, while $\pmb{R}_t$ and $\pmb{R}_r$ represents the correlations at the transmitter and receiver side, respectively, with the correlation entries given by $R_t(i,j) = \rho_t^{|i-j|}$ and $R_r(i,j) = \rho_r^{|i-j|}$. %In the simulations, we let $\rho_t = \rho_r = 0.4$.

\subsubsection{Asymptotic Insights}
It is also interesting to investigate the impact of power conversion on both the MIB as well as on the DCMC capacity of our GPSM scheme and to contrast it to that of the conventional CI pre-coding based MIMO scheme for large-scale MIMO settings of $N_t \rightarrow \infty$, $N_r \rightarrow \infty$ and for a fixed system load ratio of $N_r/N_t$. 

In order to provide these asymptotic insights for these large-scale MIMO settings, we have to replace the constraint of $||\pmb{x}||^2 = 1$ with its relaxed version of $\mathbb{E}[||\pmb{x}||^2] = 1$. Hence instead of (\ref{eq_ci_s}), we have the relaxed normalisation factor of
\begin{align}
\beta_l &= N_r/\mathrm{Tr}[(\pmb{H} \pmb{H}^H)^{-1}].
\label{eq_ci_l}
\end{align}
As shown in Corollary 4 of~\cite{6172680} and also in Chapter 14 of~\cite{Couillet:2011:RMM:2161679}, under asymptotic settings, (\ref{eq_ci_l}) tends to
\begin{align}
\beta_l \longmapsto (N_t/N_r-1).
\label{eq_deterministic}
\end{align}
By exploiting (\ref{eq_deterministic}) and by replacing (\ref{eq_ci_s}) with (\ref{eq_ci_l}) in the expressions of (\ref{eq_non-centrality2}) and (\ref{eq_snr}), we have the following deterministic quantity 
\begin{align}
\lambda &\longmapsto \lambda_d = (N_t/N_r-1)/N_a\sigma^2_{a,0},  \\
\gamma &\longmapsto \gamma_d = (N_t/N_r-1)/N_a[\sigma^2_a+\sigma^2_b/(1-\rho)]. 
\end{align} 
As a result, we have
\begin{align}
e^s_{ant,d} &= 1-\left\lbrace \int^{\infty}_{0} [F_{\chi^2_2}(g)]^{N_r-N_a} f_{\chi^2_2}(g;\lambda_d)dg \right\rbrace^{N_a}, \label{a1} \\
e^s_{mod,d} &= N_{min} \mathcal{Q} \left( d_{min} \sqrt{\gamma_d/2} \right). \label{a2}
\end{align} 
Noting that when $e^s_{ant,d}$ is small, we arrive at the approximation $\tilde{e}^s_{mod,d} \approx e^s_{mod,d}$ as suggested by (\ref{eq_convertion}), thus we may obtain $e^b_{eff,d}$ by replacing $e^s_{ant}$ and $\tilde{e}^s_{mod}$ in (\ref{eq_ber}) with their deterministic expressions in (\ref{a1}) and (\ref{a2}). Finally, by inserting $e^b_{eff,d}$ in (\ref{eq_mi}) and (\ref{eq_dcmc}) instead of $e^b_{eff}$, we arrive at the MIB and DCMC capacity for both our GPSM scheme as well as for the conventional CI pre-coding based MIMO scheme ($N_a = N_r$) under the asymptotic settings of $N_t \rightarrow \infty$ and $N_r \rightarrow \infty$.

\section{Numerical Results}
\label{sec_results}
Let us now consider our numerical results for characterizing the GPSM aided IWIPT for both RA pattern-based information transfer and for waveform-based information transfer relying on the CI-based precoding of (\ref{eq_ci}) under $\{N_t, N_r\} = \{16,8\}$. Since our focus is on RA pattern-based information transfer, simple QPSK was used for the conventional data modulation component of our waveform-based information transfer. Furthermore, we investigate the scenarios correspond to $N_a = [1,\ldots,6]$ in our GPSM scheme since the setting of $N_a = 7$ results into higher complexity than the conventional CI pre-coding based MIMO bench-maker as discussed in~\cite{6644231}. Maximum achievable throughput under MIMO settings of $\{N_t, N_r\} = \{16,8\}$ and employing QPSK modulation is shown in Table~\ref{table1}, which may be used to compare our DCMC capacity related investigations. Finally, the Signal-to-Noise-Ratio (SNR) per bit is defined as $SNR_b = 1/\sigma^2(k_{eff}/N_a)$. 

\begin{table}[t]
\caption{Maximum achievable throughput $k_{eff}$ of GPSM and of the conventional MIMO corresponding to $N_a = N_r$ (in $\bullet$) under settings of $\{N_t, N_r\} = \{16,8\}$ and employing QPSK modulation $M=2$, where $k_{eff} = k_{ant}+N_ak_{mod}$ for GPSM and $k_{eff} = N_rk_{mod}$ for conventional MIMO.}
\centering
\begin{tabular}{rccccc}
\hline 
$N_t$ & $N_r$ & $N_a$ & $M$ & $k_{eff}$ \\ \hline \hline
16 & 8 & 1 & 2 & 5 = 3+1$\times$2  \\
16 & 8 & 2 & 2 & 8 = 4+2$\times$2  \\
16 & 8 & 3 & 2 & 11 = 5+3$\times$2 \\
16 & 8 & 4 & 2 & 14 = 6+4$\times$2 \\
16 & 8 & 5 & 2 & 15 = 5+5$\times$2 \\
16 & 8 & 6 & 2 & 16 = 4+6$\times$2  \\ 
$\bullet$16 & 8 & 8 & 2 & 16 = 0+8$\times$2 \\
\hline	
\end{tabular}
\label{table1}
\end{table}

\subsection{Investigation on Error Probability}
Fig \ref{fig_ci_i} shows the spatial SER (upper) as well as the GPSM scheme's overall BER (lower) under the power-normalisation factor of (\ref{eq_ci_s}) for $\{N_t, N_r\} = \{16,8\}$. Without loss of generality, we set the power-splitting ratio to $\rho = 0$ and the noise ratio to $\alpha = 1$. 

Observe in the upper subplot of Fig \ref{fig_ci_i} that our empirical simulation based SER results recorded for the SD symbol of the RA pattern-based information transfer accurately match the analytical results given in (\ref{eq_analyticalssk2}) and form tight upper bounds. Similarly, when the GPSM scheme's overall BER is considered in the lower subplot of both figures, our empirical results accurately match the analytical results given in (\ref{eq_ber}). 

\begin{figure}
\centering
\includegraphics[width=\linewidth]{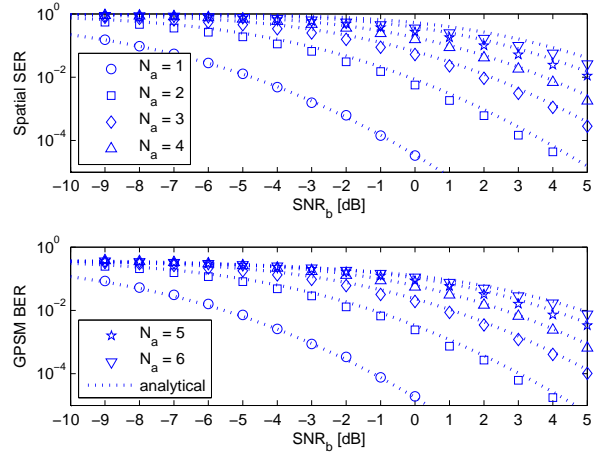}
\caption{Spatial SER of (\ref{eq_analyticalssk2}) (upper) and GPSM overall BER of (\ref{eq_ber}) (lower) using CI based TPC and a power-normalisation factor of (\ref{eq_ci_s}) under $\{N_t, N_r\} = \{16,8\}$, where we set the power-splitting ratio to $\rho = 0$ and the noise ratio to $\alpha = 1$.}
\label{fig_ci_i}
\end{figure}

Fig \ref{fig_ber} shows the GPSM scheme's overall BER under the power-normalisation factor of (\ref{eq_ci_s}) for $\{N_t, N_r\} = \{16,8\}$, where we set the power-splitting ratio to $\rho = 0.5$ and the noise ratio to $\alpha = 0.4$. As suggested by (\ref{eq_snr}), the total noise at the BB for waveform-based information transfer now becomes 1.6 times that with no WPT. Hence, we may expect to see a BER performance degradation of $10\log_{10}(1.6) \approx 2$dB. This is clearly demonstrated in Fig \ref{fig_ber} at BER of $10^{-5}$, which implies that the RA pattern-based information transfer in our GPSM scheme is immune to the RF to DC power conversion induced performance erosion as otherwise the performance degradation would be more than 2dB.

\begin{figure}
\centering
\includegraphics[width=\linewidth]{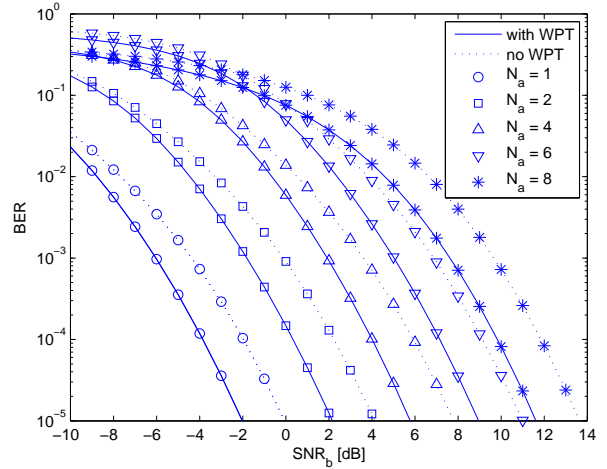}
\caption{BER of conventional CI pre-coding based MIMO bench-maker as well as GPSM using CI based TPC and a power-normalisation factor of (\ref{eq_ci_s}) under $\{N_t, N_r\} = \{16,8\}$, where we set the power-splitting ratio to $\rho = 0.5$ and the noise ratio to $\alpha = 0.4$.}
\label{fig_ber}
\end{figure}

\subsection{Investigation on MIB}
\subsubsection{Immunity to Power Conversion}
Fig \ref{fig_mie_ant} investigates the MIB versus energy trade-off of purely RA pattern-based information transfer for $\{N_t, N_r\} = \{16,8\}$ and $SNR_b = 0$dB, where we set $\alpha = 0.4$ in the left subplot and $\alpha = 0.6$ in the right subplot. The results of MIB were evaluated from (\ref{eq_mi}).

Observe from both subplots of Fig \ref{fig_mie_ant} that for all values of $N_a$, apart from setting $\rho = 1$, where the receiver acts purely as a WPT and the normalised converted energy is unity, the remaining settings of the power-splitting ratio $\rho$ do not affect the MIB of the SD symbols in our GPSM scheme, as suggested by the horizontal lines. More explicitly, for all values of $N_a$ seen in Fig \ref{fig_mie_ant}, the MIB of SD symbols vs the power conversion ratio of $\rho \in [0,0.1,\ldots,0.9]$ remains identical to that under $\rho = 0$. This implies that the RA pattern-based information transfer relying on SD symbols operates as if no WPT was provided, which justifies our Lemma~\ref{thm1}. Furthermore, by comparing the left and right subplot of Fig \ref{fig_mie_ant}, it can be seen that the MIB of SD symbols becomes lower, when an increased RF receiver noise is encountered in conjunction with $\alpha = 0.6$. Note that our RA pattern-based information transfer is immune to power conversion, but naturally not to the RF receiver noise. 

\begin{figure}
\centering
\includegraphics[width=\linewidth]{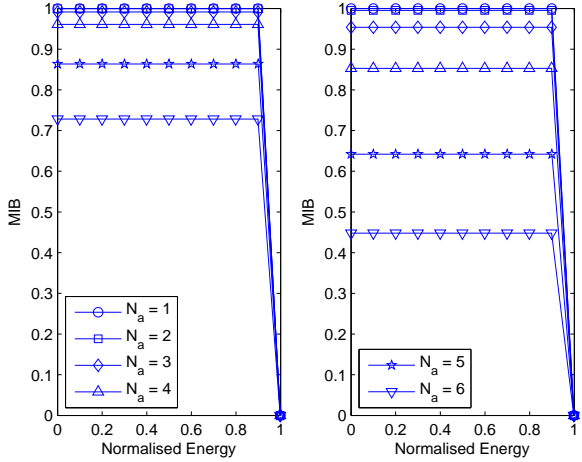}
\caption{MIB versus energy trade-off curve of RA pattern-based information transfer under $\{N_t, N_r\} = \{16,8\}$ and $SNR_b = 0$dB with $\alpha = 0.4$ (left) and $\alpha = 0.6$ (right). The MIB results were evaluated from (\ref{eq_mi}).}
\label{fig_mie_ant}
\end{figure}

\subsubsection{MIB versus Energy Trade-off}
Fig \ref{fig_mie} investigates the MIB versus energy trade-off of our GPSM aided IWIPT system supporting both RA pattern-based information transfer and waveform-based information transfer for $\{N_t, N_r\} = \{16,8\}$ and $SNR_b = 0$dB, where we set $\alpha = 0.4$ in the left subplot and $\alpha = 0.6$ in the right subplot. The results of MIB were evaluated from (\ref{eq_mi}).

Observe from both subplots of Fig \ref{fig_mie} that the near-constant MIB curves of our GPSM schemes shown in Fig \ref{fig_mie_ant} can no longer be maintained upon increasing the number of activated RAs, although they exhibit a less eroded MIB than the conventional CI pre-coding based MIMO scheme associated with $N_a = 8$ relying on pure waveform-based information transfer. This is because when waveform-based information transfer is also invoked, the resultant MIB of our GPSM scheme becomes more sensitive to the power conversion ratio than the purely RA pattern-based information transfer. This is evidenced in Fig \ref{fig_mie}, where the higher the value of $N_a$, the more conventional modulated symbols are conveyed. When comparing the left subplot to the right subplot, the MIB of our GPSM scheme appears more sensitive to the value of $\alpha$. In other words, when the RF receiver's noise is more dominant ($\alpha = 0.6$), the MIB of our GPSM schemes is reduced. On the other hand, the conventional CI pre-coding based MIMO scheme is less sensitive to the value of $\alpha$.

\begin{figure}
\centering
\includegraphics[width=\linewidth]{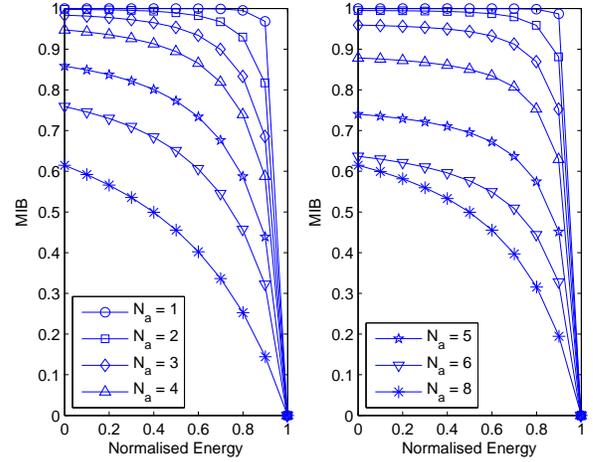}
\caption{MIB versus energy trade-off curve of our GPSM aided IWIPT system relying on both RA pattern-based information transfer and on waveform-based information transfer under $\{N_t, N_r\} = \{16,8\}$ and $SNR_b = 0$dB using $\alpha = 0.4$ (left) and $\alpha = 0.6$ (right). The MIB results were evaluated from (\ref{eq_mi}).}
\label{fig_mie}
\end{figure}

\subsection{Investigation on DCMC Capacity}
\subsubsection{DCMC Capacity versus Energy Trade-off}
Fig \ref{fig_re} investigates the DCMC capacity versus energy trade-off of our GPSM aided IWIPT system, where the rest of the settings were kept the same as in Fig \ref{fig_mie}. We can see that although Fig \ref{fig_mie} suggests that the MIB of our GPSM schemes is higher than that of the conventional CI pre-coding based MIMO scheme for all power-splitting ratios for all values of $N_a$, when the DCMC capacity is considered in Fig \ref{fig_re}, the GPSM curves associated with $N_a = 1$ and $N_a = 2$ exhibit intersections with the curve of the conventional CI pre-coding based MIMO scheme. By contrast, the GPSM schemes using $N_a \leq 3$ offer a DCMC capacity, which is higher than that exhibited by the conventional CI pre-coding based MIMO scheme. Finally, all of our GPSM schemes exhibit a higher degree of immunity to power conversion than the conventional CI pre-coding based MIMO scheme.

\begin{figure}
\centering
\includegraphics[width=\linewidth]{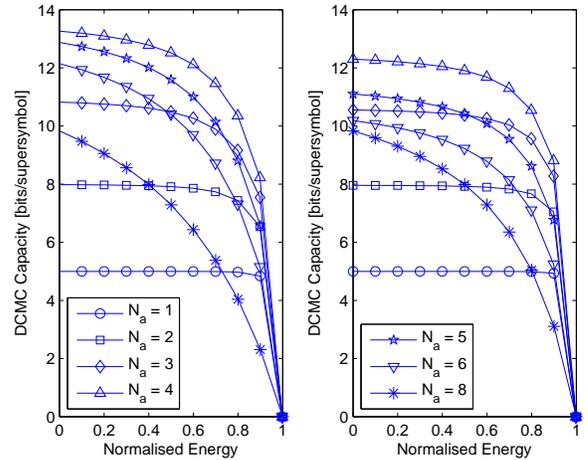}
\caption{DCMC capacity versus energy trade-off curve of our GPSM aided IWIPT system relying on both RA pattern-based information transfer and waveform-based information transfer under $\{N_t, N_r\} = \{16,8\}$ and $SNR_b = 0$dB for $\alpha = 0.4$ (left) and $\alpha = 0.6$ (right). The DCMC capacity results were evaluated from (\ref{eq_dcmc}) and these DCMC capacity results may be compared to the maximum achievable throughput values listed in Table~\ref{table1}.}
\label{fig_re}
\end{figure}

\subsubsection{DCMC Capacity versus $SNR_b$}
It is thus interesting to observe in detail the behaviour of our GPSM scheme associated with $N_a = 1$ and $N_a = 2$. Fig \ref{fig_dcmc} compares the DCMC capacity versus SNR per bit curve of the conventional CI pre-coding based MIMO scheme relying on $N_a = 8$ to our GPSM schemes associated with $N_a = 1$ and $N_a = 2$ supporting both RA pattern-based information transfer and waveform-based information transfer under $\{N_t, N_r\} = \{16,8\}$ using $\alpha = 0.4$ (left) and $\alpha = 0.6$ (right). Observe for each group of curves (circle $N_a = 1$, square $N_a = 2$ and no legend $N_a = 8$) in both subplots of Fig \ref{fig_dcmc} that the curves spanning from top to bottom correspond to power-splitting ratios of $\rho = 0.2,0.4,0.6,0.8$. 

In can also be seen in Fig \ref{fig_dcmc} that our GPSM schemes using $N_a = 1$ and $N_a = 2$ offer a higher DCMC capacity than that facilitated by the conventional CI pre-coding based MIMO scheme of $N_a = 8$ in the low $SNR_b$ range. Furthermore, observe in both subplots of Fig \ref{fig_dcmc} that the DCMC capacity curves of our GPSM schemes using $N_a = 1$ and $N_a = 2$ are only slightly affected upon increasing the power-splitting ratio $\rho$, while we observe a dramatic difference in the DCMC capacity curves of the conventional CI pre-coding based MIMO scheme of $N_a = 8$. Thus, our GPSM schemes using $N_a = 1$ and $N_a = 2$ is significantly more immune to the performance erosion of power conversion for a wide range of $SNR_b$ values.

\begin{figure}
\centering
\includegraphics[width=\linewidth]{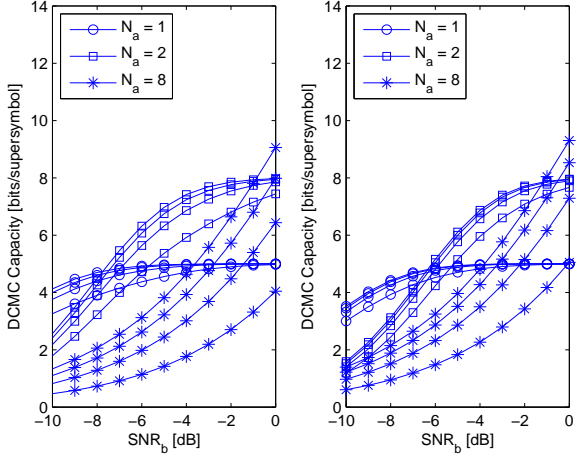}
\caption{DCMC capacity versus SNR per bit curve of our GPSM aided IWIPT system relying on both RA pattern-based information transfer and waveform-based information transfer under $\{N_t, N_r\} = \{16,8\}$ with $\alpha = 0.4$ (left) and $\alpha = 0.6$ (right). For each group of curves (circle, square and no legend), the curves spanning from top to bottom correspond to $\rho = 0.2,0.4,0.6,0.8$ in both plots. The DCMC capacity results were evaluated from (\ref{eq_dcmc}) and these DCMC capacity results may be compared to the maximum achievable throughput values listed in Table~\ref{table1}.}
\label{fig_dcmc}
\end{figure}

\subsection{Further Investigations}
\subsubsection{Robustness Investigation}
Fig \ref{fig_impairment} investigates the DCMC capacity versus energy trade-off curves of our GPSM aided IWIPT system under $\{N_t, N_r\} = \{16,8\}$ and $SNR_b = 0$dB using $\alpha = 0.4$ and suffering from imperfect CSIT associated with a variance of $\sigma_e = 0.2$ (left) or when experiencing an antenna correlation associated with $\rho_t = \rho_r = 0.4$ (right). For reasons of space-economy and to avoid crowded figures, only the results of $N_a = \{2,4,6,8\}$ are shown here. Observe from both subplots of Fig \ref{fig_impairment} that as expected, a DCMC capacity degradation is imposed both on our GPSM schemes and on the conventional CI pre-coding based MIMO scheme of $N_a = 8$. Nonetheless, our previous conclusions still hold, namely that our GPSM scheme is more immune to any power conversion induced performance erosion, which is an explicit benefit of our RA pattern-based information transfer, as evidenced in Fig \ref{fig_impairment}. 

\begin{figure}
\centering
\includegraphics[width=\linewidth]{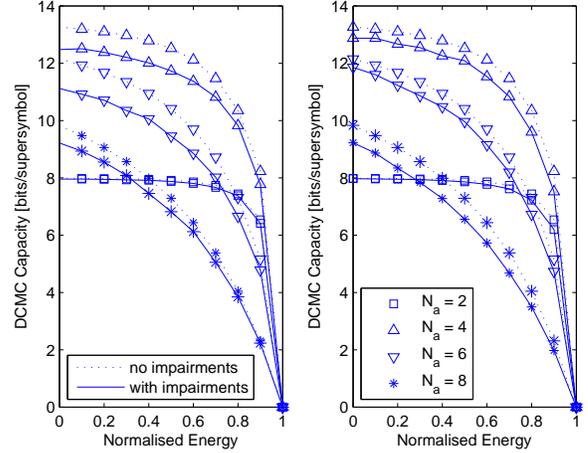}
\caption{DCMC capacity versus energy trade-off curve for our GPSM aided IWIPT system relying on both RA pattern-based information transfer and waveform-based information transfer under $\{N_t, N_r\} = \{16,8\}$ and $SNR_b = 0$dB with $\alpha = 0.4$ suffering from imperfect CSIT having a variance of $\sigma_e = 0.2$ (left) and antenna correlation associated with $\rho_t = \rho_r = 0.4$ (right). The DCMC capacity results were evaluated from (\ref{eq_dcmc}) and these DCMC capacity results may be compared to the maximum achievable throughput values listed in Table~\ref{table1}.}
\label{fig_impairment}
\end{figure}

\subsubsection{Asymptotic Investigation}
Fig \ref{fig_massive} characterizes the impact of power conversion both on the MIB (left) and on the DCMC capacity (right) for our GPSM scheme associated with $N_a = 1$ and $N_a = 2$ as well as for the conventional CI pre-coding based MIMO scheme using $N_a = N_r$ under asymptotic settings of $N_t = 2048$ and $N_r \in [64,2046]$. Furthermore, $SNR_b = 0$dB and $\alpha = 0.4$ were set in Fig \ref{fig_massive}. For each group of curves ($N_a = 1$, $N_a = 2$ and $N_a = N_r$) seen in both subplots of Fig \ref{fig_massive}, the curves spanning from top to bottom correspond to $\rho = 0.2,0.4,0.6,0.8$.

It can be seen from the left subplot of Fig \ref{fig_massive} that for all values of $N_r/N_t$, the MIB of both of our GPSM schemes is higher than that of the conventional CI pre-coding based MIMO scheme. Furthermore, the common MIB trend for all schemes is monotonically decreasing upon increasing $N_r/N_t$. However, we observe in the right subplot of Fig \ref{fig_massive} that upon increasing $N_r/N_t$, the DCMC capacity of the conventional CI pre-coding based MIMO scheme is monotonically decreasing, while the best value of $N_r/N_t$ may be deemed to be the one that attains the highest DCMC capacity in our GPSM schemes. Specifically, the DCMC capacity of the conventional CI pre-coding based MIMO scheme of $N_a = N_r$ is seen to be higher than that of our GPSM schemes for small values of $N_r/N_t$. Furthermore, our GPSM arrangement using $N_r = 2$ provides a higher DCMC capacity than the GPSM scheme relying on $N_r = 1$, when $N_r/N_t$ is lower than about 0.5, but this trend is reversed otherwise. %This is because as suggested in the left subplot of Fig \ref{fig_massive}, when we have $N_r/N_t = 0.5$, the MIB of the GPSM scheme is above 0.9 for $N_r = 1$, while it is around 0.6 for the GPSM scheme using $N_r = 2$.

Finally, a substantial DCMC capacity penalty is observed for the conventional CI pre-coding based MIMO regime of $N_a = N_r$ upon increasing the power-splitting ratio according to $\rho = 0.2,0.4,0.6,0.8$, while this trend is less obvious in both of our GPSM schemes. For example, both A and B in the right subplot of Fig \ref{fig_massive} show that an obvious DCMC capacity penalty may only be imposed for $\rho = 0.8$. This again implies that our GPSM aided IWIPT system is less sensitive to the power conversion-induced performance erosion in the asymptotic regime.

\begin{figure}
\centering
\includegraphics[width=\linewidth]{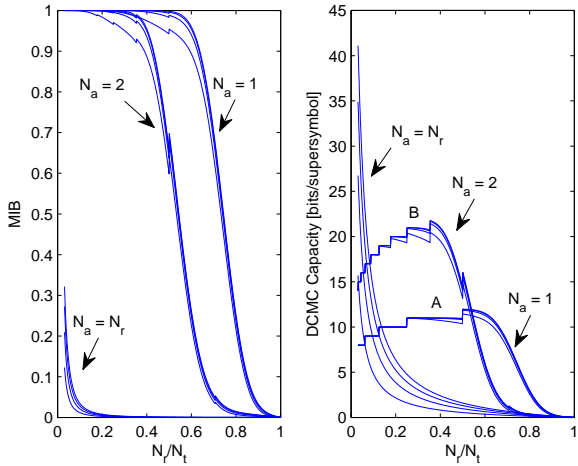}
\caption{The impact of power conversion on the MIB (left) and DCMC capacity (right) of both our GPSM scheme using $N_a = 1$ and $N_a = 2$ as well as the conventional CI pre-coding based MIMO scheme of $N_a = N_r$ under the asymptotic settings of $N_t = 2048$ and $N_r \in [64,2046]$. Furthermore, we set $SNR_b = 0$dB and $\alpha = 0.4$. For each group of curves ($N_a = 1$, $N_a = 2$ and $N_a = N_r$), the curves spanning from top to bottom correspond to $\rho = 0.2,0.4,0.6,0.8$ in both subplots. The MIB and DCMC capacity results were evaluated from (\ref{eq_mi}) and (\ref{eq_dcmc}).}
\label{fig_massive}
\end{figure}

\section{Conclusions}
\label{sec_conclusion}
A novel GPSM aided IWIPT system was proposed, where both RA pattern-based information transfer as well as waveform-based information transfer using conventional PSK/QAM modulation were employed. The analytical SER of the SD symbols and the BER of the overall GPSM scheme were derived. Furthermore, the MIB vs energy trade-off and the DCMC capacity vs energy trade-offs were discussed. Both the analytical and simulation results demonstrate that the RA pattern-based information transfer represented in the SD exhibits a beneficial immunity to power conversion and improves the overall RE trade-off, when the additional waveform-based QPSK information transfer is also taken into account. This conclusion holds true also both in the presence of inaccurate CSIT and of antenna correlations, even for large-scale MIMO settings. Hence, our proposed GPSM aided IWIPT system may be viewed as an instantiation of joint information and energy transfer in the spirit of Varshney's seminal concept~\nocite{6283481}. 

\section{List of Publications}
~\nocite{6283481,rong1,rong2,7542205,7506307,7437435,7096279,7010910,7180515,6877673,6644231,6685605,6470760,5640682,5692128,5522468,5378540,5641646,5337995,4663889,4804718,4813278,5161292,4451790,4460776,rong3,rong4,rong5,Wang2016,7802594,7542524,7465687,Wang:16,7533478,7552518,7247754,7217842,7468514,7482661,7061488,7217841,7056535,6933944,7124415,7008542,7018092,6670119,6670094,6476610,6363620,Li:13,Zhou:13,Zhou:12,6145718,6226483,6036194,4682680,5336786,Li,7249136,6399185,6214337,6213936,6093299,6093303,5779142,5779355,5779343,5594187,5594577,5594090,5594215,5502467,5493939,5208101,5073827,5073801,4698519,4657193,4533934,4525970,4489118,4224386}

%\appendices
%\input{appendix}
%
\bibliographystyle{IEEEtran}
\bibliography{ref,rong_pub}

\end{document}